\newcommand{\Zed}{\mathbb Z}
\newcommand{\Nat}{\mathbb N}
\newcommand{\Real}{\mathbb R}
\begin{document}

\title{On Descriptional   Complexity of
the Planarity Problem for 
Gauss Words}

\author{Vitaliy Kurlin \inst{1}
\and Alexei Lisitsa \inst{2}
\and Igor Potapov \inst{2}
\and Rafiq Saleh \inst{2}}

\institute{
Department of Mathematical Sciences,
Durham University,  \\
Durham DH1 3LE, UK \\
\email{vitaliy.kurlin@durham.ac.uk}
\and 
Department of Computer Science,
University of Liverpool, Ashton Building, \\
Ashton St, Liverpool L69 3BX, U.K.\\
\email{Lisitsa,Potapov,R.A.Saleh@liverpool.ac.uk}
}

\maketitle

\begin{abstract}
In this paper we investigate the descriptional  complexity
of knot theoretic problems and show upper bounds for planarity problem
of signed and unsigned knot diagrams represented by Gauss words.
Since a topological equivalence of knots can involve knot diagrams with arbitrarily many crossings
 then Gauss words will be considered as strings over an infinite (unbounded) alphabet.
For establishing the upper bounds on recognition of knot properties,
we study these problems in a context of automata models over an infinite alphabet.
\end{abstract}


\section{Introduction}
\vspace*{-2mm}
Algorithmic and computational topology is a new growing
branch of modern topology. Much of the recent effort has
focused on classifying the inherent complexity of topological problems.
In this paper we investigate the  complexity
of knot theoretic problems and show upper  bounds for planarity problem
of signed and unsigned knot diagrams.
The main goal of proposed approach is to give a new insight on knot problems and
characterise knot problems according to their computational complexity.
The results presented in this paper were achieved by a combination
of methods from knot theory and automata theory. 

Knot theory is the area of topology that studies mathematical knots and links.
A knot (a link)  is a smooth embedding of a circle (several circles, respectively) in 
 3-dimensional space ${\Real}^3$, considered up to a smooth deformation of the ambient space.
It is well established and exciting area of research with strong connections 
 with topology, algebra and combinatorics. Examples of interactions between knot
theory and computer science include works on formal language theory \cite{kari1993mig},
 quantum computing \cite{abramsky2007tla,freivalds2005ktj,lomonacojr2006tqc} and
computational complexity  \cite{hass1999cck}.

Knots can be described in many ways, including various discrete representations. 
A common method of representing a knot is  a knot diagram that is a projection of 
 the knot to a plane in a general position involving only double crossings.
At each crossing we indicate which branch is "over" and which is "under",
 which allows us to recreate the original knot,  see the left picture of Figure~1 in subsection~3.1.
For oriented knots and links, each crossing has a well-defined sign (`$+$' or `$-$') 
 as shown on the right picture of Figure~1.

A knot diagram can be encoded by a string of symbols 
 $O_i$'s (over) and $U_i$'s (under) known as a {\it Gauss word}. 
The procedure of writing a Gauss word can be described as follows.
Starting from a base point on a knot diagram, write down the labels of the crossings 
 and their types of strand ordered according to the orientation of the knot.
The oriented trefoil $K$ in the middle picture of Figure~1 is encoded by 
 the \emph{signed} Gauss word ${U_1}^+ {O_2}^+  {U_3}^+ {O_1}^+ {U_2}^+ {O_3}^+$.
Removing signs leads to the \emph{unsigned} Gauss word representing the non-oriented trefoil.
Indices of crossings can be arbitrarily permuted, so one knot diagram can be encoded by many Gauss words.
A link can be represented by \emph{several} Gauss words -- one for each component of the link.

The construction of a Gauss word is  quite straightforward 
 by reading visited crossings travelling along a knot diagram. 
The inverse problem of reconstructing a knot from a strings of 
 symbols $O_i$'s and $U_i$'s is harder and is not always solvable. 
An arbitrary Gauss word may not encode a classical (planar) diagram.
In such a case the Gauss word corresponds to a non-classical knot 
 diagram containing virtual crossings that have no overcrossing information
 and are not listed in the Gauss word, see Figure~2 in subsection~3.2.
Such a crossing naturally appears after projecting a knot in a thickened surface to the plane, 
 when two branches of the knot on different sides of the surface meet in a planar projection, 
 but not in the canonical projection to the surface, see \cite[Figure~2]{kurlin2006gpr}.
So virtual knots \cite{kauffman1998vkt} are motivated by studying arbitrary Gauss words
 and by extending the classical knot theory in ${\Real}^3$ to thickened surfaces.
Most of the problems of recognising knots properties 
 (such as virtuality, unknottedness, equivalence) are known to be decidable,
with different time complexity. However their complexity in terms of computational power of devices
needed to recognise the knot properties was not studied yet. In this paper we address this problem
and provide first known bounds for some knot problems in this context.

The central problem that we study in this paper is to determine whether 
 a given Gauss word is \emph{planar}, i.e. encodes a plane diagram of a classical knot in ${\Real}^3$.
Any knot diagram can be made more complicated by introducing new trivial crossings, but keeping 
 the topological type of the knot, e.g. see Reidemeister moves on knot diagrams in \cite[Figure~4]{kurlin2006gpr}.
Hence Gauss words of knot diagrams should be considered as strings over an infinite (unbounded) alphabet.
In this context we cannot estimate computational complexity in terms of classical
models over finite alphabets and need to consider a new hierarchy of languages and models over
an infinite alphabet. Such models were recently introduced in \cite{kaminski1990fma,neven2004fsm}.

In Section~\ref{Models} we describe and extend the models of automata over an infinite alphabet
 that we use for establishing the lower and upper bounds on recognition of knot properties.
Then in Section~\ref{Results} we show that the language of planar (non-planar) signed Gauss words 
 can be recognised by deterministic two-way register automata simulating  
 the recently discovered linear time algorithm in \cite{kurlin2006gpr}. 
Due to the fact that the algorithm presented in \cite{kurlin2006gpr} allows us 
 to check planarity property not only for knots but also for links, 
 we think that the proposed idea of recognising planarity by register automata 
 can be extended to links after minor modification. 
The result is final in the sense that the power of
non-deterministic one-way register automata is not even enough 
 to recognise whether an input is a Gauss word.
In \cite{lisitsa2009automata} we conjectured  that the planarity problem for unsigned Gauss words is harder 
 than the the same problem for signed Gauss words and cannot be solved by non-deterministic register or 
 $k$-pebble automata over an infinite alphabet. Here we show that co-nondeterministic register automata are 
capable to recognize unsigned planar Gauss words.  
For the case of unsigned Gauss words we provide an upper bound
by showing that planarity can be checked by deterministic linearly bounded memory automata.

\section{Automata over Infinite Alphabets}\label{Models}

Let $D$ be an infinite set called an \emph{alphabet}. A word, or a string over $D$, or shortly,
$D$-word or $D$-string, is a finite sequence 
 $d_{1}, \ldots ,d_{n}$, where $d_{i} \in D$, $i=1, \ldots, n$.
A language over $D$ (a $D$-language) is a set of $D$-words.
For a word $w$ and a symbol $d$, denote by $\mid w\mid_d$ the number of occurrences of $d$ in $w$.
As usual $\mid w \mid$ denotes the length of the word $w$. A language $L$ over an infinite alphabet $D$ is called \emph{$n$-bounded} if there is a constant $n \in \Nat$ such
that for any $w \in L$ and for any $d \in D$ one has $\mid w\mid_d \le n$.  
All languages we consider are bounded.

\subsubsection{Register Automata}

Register automata are finite state machines equipped with a finite number of memory cells called registers which may hold values from an infinite alphabet. It is one of the weakest models of automata over infinite alphabets introduced in \cite{kaminski1990fma} and studied further in \cite{neven2004fsm}.

\begin{definition}[\cite{neven2004fsm}]
A non-deterministic two-way k-register automaton over an infinite alphabet D is a tuple (Q,q{\tiny 0}{\small ,F,}{\large $\tau$}{\tiny 0}{
,P) where} {Q is a finite set of states, q{\tiny 0} $\in$ Q is the initial state,F $\subseteq$ Q is the set of final states, }{\large $\tau$} {\tiny 0}{
: \{1,...,k\}$\rightarrow$ D $\cup$ \{$\vartriangleright$,$\vartriangleleft$ \}is the initial register assignment and P is a finite set of transitions of the forms:}
\end{definition}
\begin{itemize}
\item[$1)$] $(i,q) \rightarrow (q',d)$
(If a current state is $q$ and the observed symbol on the tape equals to a value in the register $i$
then enter the state $q'$ and move along the string according to the specified direction $d$;)
\item[$2)$] $q \rightarrow (q',i,d)$
(If a current state is $q$ and the observed symbol on the tape does not equal to any value held in registers then enter the state $q'$, copy the current symbol to a specified register $i$ and move along the string according to the specified direction $d$,
where $i \in \{1,..,k\}, q,q' \in Q$ and $d \in \{stay, left, right\}.$)
\end{itemize}

Given a $D$-word $d$ delimited by symbols $\vartriangleright$,$\vartriangleleft$ on the input tape, an automaton starts in a state $q_{0}$ and in the position of the first letter of $d$ and applies non-deterministically any applicable rules. As usual, if automaton is able ever to reach a state $q \in F$, it accepts the word, otherwise the word is rejected. The set of all accepted words forms a language recognisable by an automaton.
An automaton is \emph{deterministic} if in each configuration at most one transition applies.

For the purpose of this paper we modify the definition of register automata from \cite{neven2004fsm} by allowing more general transition rules that allows replication of the same value in different registers. This does not affect the computational power of the model (see Lemma~\ref{lemma:modified} below), but makes the design of such automata for various recognition problems much more natural and easier. Similar modifications (in more general setting) have appeared in \cite{david2004mdi,demri2006lfq}.

We define  \emph{modified} two-way k-register automata by adding to the definition above two extra types of transition rules:
\begin{itemize}
\item[$3)$] $(i,q) \rightarrow (q',j,d)$
(If  a current state is $q$ and the observed symbol equals to a value in the register $i$
then enter the state $q'$, copy the current symbol to a register $j$
and move along the string according to the specified direction $d$);
\item[$4)$] $q \rightarrow (q',d)$
(If a current state is $q$ and the observed symbol does not equal to any value held in registers then enter the state $q'$
and move along the string according to the specified direction $d$,
where $i,j \in \{1,..,k\}, q,q' \in Q$ and $d \in \{stay, left, right\}$).
\end{itemize}

\begin{lemma}~\label{lemma:modified}
The models of original register automaton and modified register automaton over an infinite alphabet are equivalent.
\end{lemma}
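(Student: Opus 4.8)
The plan is to prove the two inclusions separately. One direction is immediate: transition rules of types~1 and~2 are a special case of those allowed in a modified automaton (with types~3 and~4 simply never used), so every original $k$-register automaton already \emph{is} a modified $k$-register automaton, and hence its language lies in the modified class.

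For the converse I would simulate an arbitrary modified $k$-register automaton $A$ by an original automaton $A'$ with $k+1$ registers (writing $r^{A}_x$ for the current content of register $x$ of $A$, and similarly $r^{A'}_j$). The key device is that $A'$ carries in its finite control a \emph{renaming} map $\pi\colon\{1,\dots,k\}\to\{1,\dots,k+1\}$ with the intended meaning ``register $x$ of $A$ is currently stored in register $\pi(x)$ of $A'$''; since the domain of $\pi$ has only $k$ elements, at least one register of $A'$ always lies outside $\mathrm{im}\,\pi$ and serves as scratch space. The two new rule types then become cheap to simulate: a rule of type~3, $(i,q)\to(q',j,d)$, which merely duplicates the value already present in register $i$ into register $j$, is realised \emph{without moving any data} — $A'$ takes the type-1 transition $(\pi(i),q)\to(\bar q,d)$ and in passing to the next state redefines $\pi(j):=\pi(i)$; a rule of type~4, $q\to(q',d)$, which steps over a symbol absent from all registers of $A$ without recording it, is realised by stepping over that symbol with the type-2 transition $q\to(\bar q,m,d)$ into a scratch register $m$, the stored value being irrelevant. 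Rules of types~1 and~2 of $A$ are simulated in the obvious way, routed through $\pi$ (for type~2 one first writes the new symbol into a scratch register and then updates $\pi$). A straightforward induction on the length of the run then shows that $A$ and $A'$ traverse matching configurations, so $L(A')=L(A)$, and that $A'$ is deterministic whenever $A$ is.

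The step that needs care — and the one I expect to be the real obstacle — is the treatment of register comparisons, because a scratch register of $A'$ still holds a stale value and an in-use register $\pi(y)$ may by chance hold the symbol currently being scanned; either could make $A'$ ``see'' a match where $A$ sees a new symbol, or the reverse. I would control this with the simulation invariant: \emph{(i)}~$r^{A'}_{\pi(x)}=r^{A}_{x}$ for all $x$; \emph{(ii)}~$r^{A}_x=r^{A}_y$ implies $\pi(x)=\pi(y)$, so coincidental equalities in $A$ are always reflected by $\pi$; and \emph{(iii)}~the $k+1$ registers of $A'$ hold pairwise distinct values. Under (i)--(iii) the register contents of $A'$ are exactly those of $A$ (each appearing once, as the common $\pi$-image) together with pairwise distinct scratch values, so when $A'$ scans a symbol it either finds it in \emph{no} register — so $A$ is taking a type-2 or type-4 step — or in exactly one register, which is either $\pi(i)$ for a uniquely determined $i$ (so $A$ takes a type-1 or type-3 step on register $i$) or a scratch register (so $A$ again takes a type-2 or type-4 step); in every case $A'$ can decide, from $q$, $\pi$ and the matched register alone, which rules of $A$ to simulate, and with the same branching. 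Checking that (i)--(iii) survive each simulated step is routine, the only slightly delicate point being that a register of $A'$ leaves $\mathrm{im}\,\pi$ only when the register of $A$ it represented has just been overwritten and was the sole holder of its old value, so that value no longer occurs anywhere in $A$. Finally, an initial configuration of $A'$ satisfying (i)--(iii) is easy to arrange, since $\tau_0$ of $A$ has only finitely many values while $D$ is infinite, so distinct fresh constants can be chosen for the scratch registers.
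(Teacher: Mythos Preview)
Your proof is correct and follows essentially the same strategy as the paper: both simulate the modified automaton in the original model by storing only the \emph{distinct} register values physically while recording a finite renaming map (your $\pi$, the paper's surjection $\phi$) in the control state, together with one extra register used as scratch/dummy space. Your version is somewhat more unified---you handle types~3 and~4 with a single mechanism rather than two separate reductions---and you spell out explicitly the distinctness invariants and the scratch-register case analysis that the paper dismisses as ``obvious details.''
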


\begin{proof}
We show that two extra types of rules of the modified model can be simulated by the original automata.

The rule of type 4 is simulated by adding one extra dummy register $k+1$ and replacement of rules of modified
automata of the form $q \rightarrow (q',d)$  by a pair of rules $(k+1,q) \rightarrow (q'd)$ and $q \rightarrow (q',k+1,d)$.
If a value of the register $k+1$ is equal to the observed symbol then the rule $(k+1,q) \rightarrow (q'd)$ is applicable otherwise the rule  $q \rightarrow (q',k+1,d)$ is applicable.
Also we replace the rule $q \rightarrow (q',i,d)$ of type 2  in the modified automata model by a pair of rules; original type 2 rule
$q \rightarrow (q',i,d)$ and type 3 rule  $(k+1,q) \rightarrow (q'd)$.

The rule of type 3 $(i,q) \rightarrow (q',j,d)$ of the modified  model
that allows storing the same value in different registers, can be simulated in original model
by using the following construction.

The state of the registers of the modified automaton, that is  a sequence of not necessarily different values
$R=[r_1,r_2,\ldots ,r_{k+1}]$ is represented in the simulating automaton as a pair:
\begin{itemize}
\item the set of unique values $U=\{r_1,r_2,\ldots ,r_{k+1}\}$, and
\item the surjective mapping $\phi: \{1,\ldots,k+1\} \rightarrow U$
\end{itemize}
The content of $U$ is kept in the registers and since the mapping
$\phi$ is finite, it can be kept in the finite state control. Now
it is straightforward to simulate the effects of all  possible
types of rules, including the type 3, in terms of pairs $U,\phi$.
We omit obvious details.\qed

\end{proof}

\subsubsection{Pebble Automata}
As an alternative model of automata over infinite alphabet,  \emph{pebble automata}(PA) 
 were introduced in \cite{milo2003txt} and further studied in \cite{neven2004fsm}. We follow the
definitions in \cite{neven2004fsm}.  In this model, instead of registers, finite state machines are equipped with the finite set of pebbles which can be placed on the input string and later lifted following the \emph{stack} discipline. That means pebbles are numbered from $1$ to $k$ and pebble $i+1$ can only be placed when pebble $i$ has already been placed on the string and vice-versa, pebble $i$ can only be lifted if $i+1$ is not on the string.
Further assumption is that the pebble with the highest number acts as a head, so an automaton has an access to the symbol of the string under such a pebble and to the information on which other pebbles are located at the same position.
The transition of pebble automata depends on the following: the current state, the pebbles placed on the current position of the head, the pebbles that see the same symbol as the top pebble. The effect of the transition is the change of a state, movement of the head and, possibly, removal of the head pebble, or placement of the new pebble.
As usual acceptance of a word is defined as reachability of one of the final states.

As expressive power concerned, in general pebble automata are incomparable with register automata \cite{neven2004fsm}. We will show, however, in the Section~\ref{Results} that over a class of \emph{bounded languages}, including all languages of our interest,
PA can be effectively simulated by RA.

\subsubsection{Linearly Bounded Memory Automata}
In all models above the input can be thought of as  given on the input tape which can only be read,
but not written on. Linearly bounded memory automata (LBMA) are an extension of register automata with the input tape.
The automaton can read and write in the tape cells symbols of an infinite alphabet.
The input is given on the initial part of the tape and for the input size $n$, the size of the tape is assumed to be $O(n)$, i.e. linearly bounded. Types of rules of LBMA include all types of rules of (modified) RA and additional rules allowing us to write on the tape. For every form $L \rightarrow (\ldots)$ of rules of the (modified) RA model the following is a form of rule for LBA: $L \rightarrow (\ldots, i)$, where
$i \in \{1,\ldots k\}$. The effect of application of the latter is the same as of the former,  plus the automaton \emph{writes the content of the register $i$} in the current position on the tape before possible head movement.

\subsubsection{Words and Data Words}
In previous works on the computational models on infinite alphabets it has been acknowledged that in many situations it is natural to consider infinite alphabets as the subsets of  $\Sigma \times \Delta$ where $\Sigma$ is a \emph{finite}  set  and $\Delta$ is an infinite set. Thus, the symbol here is an ordered pair $(a,b)$. The words over such alphabets are called \emph{data words}
\cite{bjorklund2007nrd}. In the definition of automata over data words, it is sensible to assume that when an automaton reads a symbol $(a,b)$ it has a direct access to both components of the pair.  For this purpose, the form of transition rules can be adapted to include one extra argument on the left-hand sides.
For example, the rule  $(c,i,q) \rightarrow (q',d)$ is read as "if an automaton is in a state $q$ and  observes the symbol $(c,a)$ and $a$ is the content of the register $i$ then the automaton  can change the state to $q'$ and move the head along the direction $d$." It should be clear now how to modify the definitions of all above models to work over data words.
\section{Recognisability of Knot Properties}\label{Results}
\subsection{The Language of Gauss Words}
Knots defined as smooth embeddings of a circle in 3-dimensional Euclidean space 
 can be combinatorially encoded by finite structures, such as graphs or words. 
One of such discrete representations is a Gauss word consisting of a sequence of symbols 
(labels O ("over") and  U ("under") with indices and signs), which
 can be read of a projection of the knot to the plane.
Fixing an orientation of a knot, one can distinguish between positive and
 negative crossings labelled in Figure~1.

\begin{center}
\includegraphics[scale=0.35]{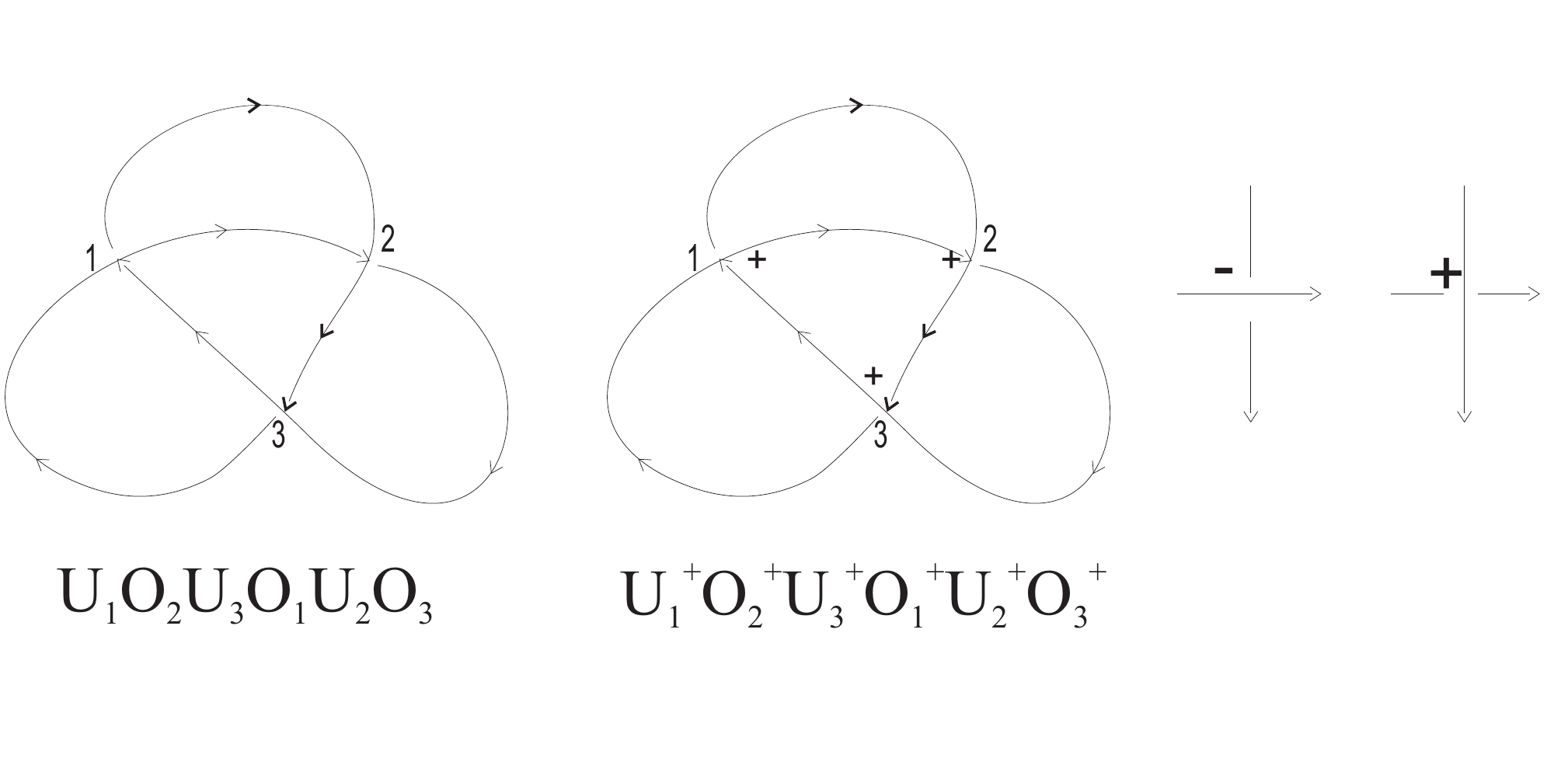}\\
\textbf{Fig. 1.} A knot diagram of a trefoil with its unsigned and signed Gauss words
\end{center}

\begin{definition}
An unsigned Gauss word $w$ is a data word  over the alphabet $\Sigma \times \Nat$ where $\Sigma=\{U,O\}$, such that for every $n \in \Nat$ either
\begin{itemize}
\item $|w|_{(U,n)} = |w|_{(O,n)} = 0$, or
\item $|w|_{(U,n)} = |w|_{(O,n)} = 1$
\end{itemize}
\end{definition}

The formal conditions above are geometrically justified by non-oriented knot diagrams,
 where the $n$-th crossing is labelled by $U_n$ (`under') and $O_n$ (`over').
In signed Gauss words of oriented knot diagrams each crossing 
 also appears twice and with the same sign, which is formalised below.

\begin{definition}
A signed Gauss word $w$ is a data word over the alphabet $\Sigma \times \Nat$ where  $\Sigma=\{U^+,O^+,U^-,O^-\}$,  such that for every $n$ either
\begin{itemize}
\item $|w|_{(U^{+},n)} = |w|_{(O^{+},n)} = |w|_{(U^{-},n)} = |w|_{(O^{-},n)} = 0$, or
\item $|w|_{(U^{+},n)} = |w|_{(O^{+},n)} = 1$ and $|w|_{(U^{-},n)} = |w|_{(O^{-},n)} = 0$, or
\item $|w|_{(U^{-},n)} = |w|_{(O^{-},n)} = 1$ and $|w|_{(U^{+},n)} = |w|_{(O^{+},n)} = 0$,
\end{itemize}
\end{definition}

\begin{proposition}
The language $L_{GW}$ of unsigned Gauss words and the language $L_{SGW}$ of signed Gauss words 
 are recognisable by deterministic 2-way register automata.
\end{proposition}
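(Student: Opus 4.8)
The plan is to reduce both memberships to a single \emph{per-position} condition and then verify that condition by a left-to-right scan that uses one register together with the two-way head. Indeed, a data word $w$ over $\Sigma\times\Nat$ lies in $L_{GW}$ (resp. $L_{SGW}$) if and only if for every position $p$ of $w$, if $(a,n)$ is the symbol at $p$, then the data value $n$ occurs at exactly one position $q\neq p$, and the $\Sigma$-component at $q$ is \emph{compatible} with $a$: for $L_{GW}$ the two components are $\{U,O\}$; for $L_{SGW}$ they are $\{U^+,O^+\}$ or $\{U^-,O^-\}$. Since the $\Sigma$-component ranges over the fixed finite set $\Sigma$, it can always be kept in the finite control; the only genuinely unbounded datum is the index $n$, which will occupy a single register. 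By Lemma~\ref{lemma:modified} we may freely use the modified rules, and by the remarks on data words we may branch on the $\Sigma$-component of the observed symbol while comparing its $\Nat$-component to the register.

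Concretely, I would build a deterministic two-way one-register automaton that maintains a head scanning the positions of $w$ from left to right. At the position $p$ currently under the head, with observed symbol $(a,n)$: copy $n$ into register~$1$, record $a$ in the state, walk left to $\vartriangleright$, and sweep right to $\vartriangleleft$, using register~$1$ to test each visited index for equality with $n$, a counter capped at $3$ to count occurrences of $n$, and a few bits to record which $\Sigma$-components accompany those occurrences (for $L_{GW}$ it suffices to record "an $O$ was seen" and "a $U$ was seen"; for $L_{SGW}$, one bit per element of $\Sigma$). The automaton rejects the whole word immediately if the counter reaches $3$, or if after the sweep the count is not $2$, or if the recorded components are not compatible. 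Otherwise the check at $p$ has passed and the head must be returned to resume at $p{+}1$: it walks back to $\vartriangleright$ and sweeps right, stopping at the first, respectively the second, occurrence of the value in register~$1$ according to whether $p$'s partner lies to its right or to its left — a single bit determined during the previous sweep — and then moves one cell right. If that move reaches $\vartriangleleft$ the automaton accepts; the empty word is accepted outright.

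Correctness is immediate, since the sweep tests, for each position, exactly the per-position condition above, and the two definitions are equivalent to the conjunction of those conditions over all positions. Determinism holds because in every configuration the observed $\Sigma$-component, the outcome of the single comparison "observed index $=$ register~$1$?", and the current state jointly determine the applicable rule. The resource bound is finite: apart from register~$1$, every quantity tracked — the label $a$, the accumulated partner-labels, the occurrence counter (capped at $3$), the "partner is to the left/right" bit, and the "which occurrence to stop at" target — ranges over a fixed finite set independent of $|w|$, so a fixed state set and one register suffice; the running time is $O(|w|^{2})$, which is irrelevant here.

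The only point requiring care — the "main obstacle", such as it is — is the bookkeeping of the head's return trips: a two-way register automaton cannot mark tape cells, so after each verification sub-sweep the head has drifted to an endmarker, and the position $p$ under consideration (whose datum need not look unique while the input is still being tested) must be relocated purely by counting equal data values from a delimiter. Once one notes that the at most two admissible occurrences of an index are disambiguated by "first from the left" versus "second from the left" together with the already-known side of the partner, this is routine. Finally, $L_{SGW}$ is recognised by the very same automaton with the single change that the compatibility test on the pair of $\Sigma$-components reads "same sign, opposite over/under" in place of "opposite over/under".
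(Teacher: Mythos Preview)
Your proof is correct and follows essentially the same approach as the paper: for each position, load its data value into a register, sweep the tape to count occurrences of that value and record the accompanying $\Sigma$-labels, reject if the pattern is wrong, and otherwise proceed to the next position. You are in fact more careful than the paper about the one nontrivial detail, namely how to relocate the head at position $p$ after the verification sweep; the paper glosses over this, while you spell out the ``first vs.\ second occurrence from the left'' trick explicitly.
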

\begin{proof}
We explain only the construction of a 2-way deterministic register automaton $A$ which recognises $L_{GW}$. With obvious modifications the automaton can be adapted to the case of $L_{SGW}$.
Let $w$ be a data word $(a_1,b_1)...(a_n,b_n)$ such that $a \in \Sigma =\{U,O\}$ and $b \in \Nat$.
The automaton $A$ reads the first symbol  $(a_i,b_i)$ and stores the value of $b_i$ in some register, then it moves right then left along the word to compare the current symbol $(a_j,b_j)$  with the value of $b_i$ held in some register. If the symbol $(a_j,b_j)$  where $b_i = b_j$ and $b_i \ne a_j$ is found and there are no further occurrences of $b_i$, then the automaton $A$ moves right along the word and checks the next symbol. 
If the next symbol is equal to the end symbol then $A$ moves to an accepting state.
\end{proof}

\begin{proposition}
The languages $L_{GW}$  and $L_{SGW}$ are not recognisable  by non-deterministic
one-way register automata.
\end{proposition}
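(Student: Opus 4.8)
The plan is to argue by contradiction, exploiting that a one-way $k$-register automaton keeps, while scanning the input left to right, only a finite state together with at most $k$ data values in memory. On a Gauss word in which all ``first occurrences'' of crossing labels precede all ``second occurrences'', once more than $k$ labels have been opened the automaton has necessarily forgotten one of them, and it can then no longer check that the forgotten label's partner occurs.

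Concretely, suppose $A$ is a non-deterministic one-way $k$-register automaton recognising $L_{GW}$; by Lemma~\ref{lemma:modified} we may assume $A$ is in the modified form (rules of types~1--4). For $m>k$ take the data word $w_m = (O,1)(O,2)\cdots(O,m)(U,1)(U,2)\cdots(U,m)$, which lies in $L_{GW}$ since every $n\le m$ contributes exactly one $(O,n)$ and one $(U,n)$ and no other value occurs. Hence $A$ has an accepting run $\rho$ on $w_m$; fix one. Let $C$ be the configuration of $\rho$ the first time its head stands on the position of $(U,1)$. The $\le k$ registers of $C$ cannot hold all of the $m$ values $1,\ldots,m$, so there is some $j\le m$ whose data value is stored in no register of $C$. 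Because $A$ is one-way it never rescans the prefix $(O,1)\cdots(O,m)$, and in the suffix the value $j$ occurs only at the symbol $(U,j)$; since a register can only ever acquire the data value currently under the head, the value $j$ is still absent from all registers at the step of $\rho$ that reads $(U,j)$. Consequently the transition of $\rho$ fired there is one of the ``the current symbol matches no register'' rules (type~2 or type~4), whose effect is independent of which concrete value is read.

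Now choose a natural number $j'$ occurring neither in $w_m$ nor in the initial register assignment of $A$, and let $w_m'$ be $w_m$ with the single symbol $(U,j)$ replaced by $(U,j')$. The transition sequence of $\rho$ is still a valid run of $A$ on $w_m'$: the two words agree up to the position of $(U,j)$, so $\rho$ arrives there in the same state with the same register contents; $j'$, being fresh, also matches no register, so the same type~2/4 transition applies; and from then on the two runs can differ only in that one register may hold $j'$ rather than $j$, a value that never recurs in either word, so every later register-equality test gives the same answer and $\rho$ still reaches its accepting state. Thus $A$ accepts $w_m'$, although $w_m'\notin L_{GW}$ because $(O,j)$ occurs in it while $(U,j)$ does not --- a contradiction. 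The same argument applied to $w_m=(O^+,1)\cdots(O^+,m)(U^+,1)\cdots(U^+,m)$ yields the statement for $L_{SGW}$.

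The one genuinely delicate point is the middle step: showing that the forgotten label $j$ cannot re-enter a register between the end of the prefix and the reading of $(U,j)$. This relies on one-wayness (the prefix, where $j$ did appear, is never revisited), on the special shape of $w_m$ (in the suffix $j$ appears only at $(U,j)$ itself), and on the observation that $\mathit{stay}$ moves cannot help, as they bring no new value under the head. Everything afterwards --- that the altered run remains applicable and accepting on $w_m'$ --- is a routine indistinguishability induction on the tail of the run.
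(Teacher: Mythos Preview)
Your proof is correct and follows the same pigeonhole argument as the paper's own proof: build a Gauss word with more distinct labels than registers, locate a label the automaton has forgotten after scanning the first half, and alter one occurrence of that label to produce a non-Gauss word the automaton still accepts. The paper's version is terser and performs the swap in the \emph{prefix} (replacing $(U,i)$ by $(U,i{+}1)$) rather than in the suffix with a fresh value as you do, but the underlying idea is identical --- and your choice of modifying the suffix with a genuinely fresh $j'$ makes the ``same run still applies'' step cleaner to justify.
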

\begin{proof}
We show the argument only for the case of $L_{GW}$. With obvious modifications it works for
$L_{SGW}$  as well. The argument is not new and was used e.g. in \cite{bjorklund2007nrd} 
 to show non-recognisability of some data languages by one-way register automata.
Assume that language $L$ is recognisable by some one-way register automaton $A$ with $n$ registers.
Consider the word 
$$w = (U,1)(U,2) \ldots (U,n+1)(O,1)(O,2) \ldots (O,n+1) \in L.$$ 
The automaton $A$ accepts this word.
After reading first $n+1$ positions, there is at least one index value $i\in \{1, \ldots, n+1\}$  
 which does not appear in any register of $R$. 
That means that automaton $A$ also accepts a word $w' \not\in L$ 
 obtained from $w$ by replacing $(U,i)$ with $(U,i+1)$. 
This contradicts the assumption on $A$.
\end{proof}

\subsection{Planar  and Non-planar Gauss Words}
Every knot can be represented by a Gauss word, but not every Gauss
word represents a classical knot in ${\Real}^3$. 
 For example, any attempt to reconstruct 
 a knot diagram from the Gauss word $O^{-}_{1} O^{-}_{2} U^{-}_{1}
 O^{+}_{3} U^{-}_{2} U^{+}_3$ inevitably leads to new (\emph{virtual})
crossings which are not present in the Gauss word, see Figure 2.
This observation was one of the motivations for introducing
\emph{virtual knot theory} \cite{kauffman1998vkt}.  
A Gauss word representing a knot diagram on a plane without virtual crossings 
 is called \emph{classical} or \emph{planar}. The problem of recognising planar
 Gauss words have been formulated by Gauss himself and recently
 several algorithmic solutions for both signed and unsigned cases
 have been proposed, e.g. in
\cite{cairns1993pps,cairns1996ppi,kauffman1998vkt,kurlin2006gpr,read1976gauss,lovasz1976forbidden,manturov2005proof,rosenstiehl1984gauss,de1999characterization,shtylla2009realization}.

In this section we address the question of recognising planarity of Gauss words
 by automata models considering the unsigned and signed cases separately.

\begin{center}
\includegraphics[scale=0.30]{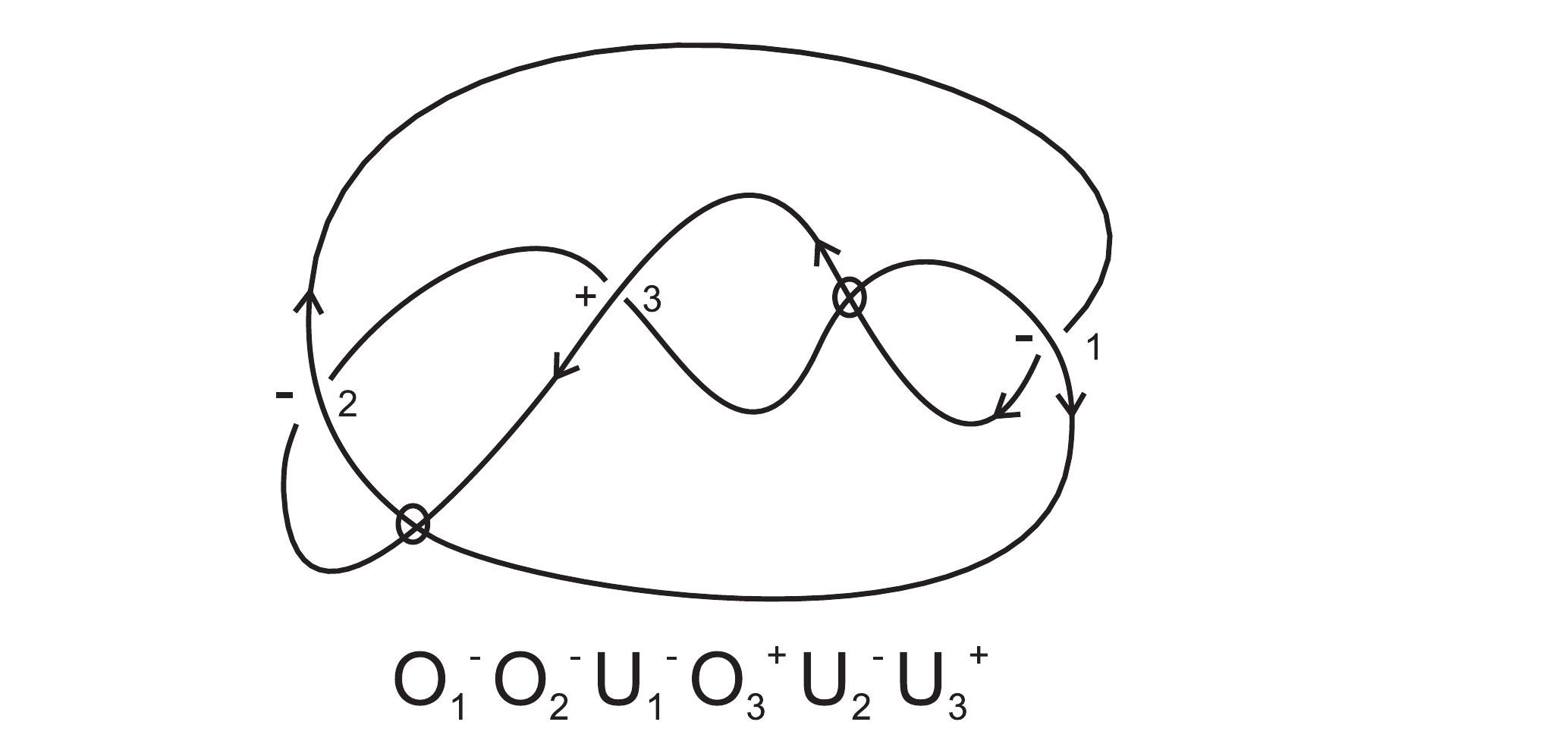}\\
\textbf{Fig. 2.} An example of a virtual knot with 2 virtual crossings
\end{center}

\subsubsection{Signed Gauss Words}

First we will show that the language of planar signed Gauss words 
 can be recognised by two-way deterministic register automata.
The design of automata will be based on the linear time algorithm 
 presented by V.~Kurlin in \cite{kurlin2006gpr}.
The main idea of the algorithm is to find the least genus of the surface 
 containing a knot diagram without virtual crossings encoded by a given Gauss word. 
For this purpose the Euler characteristic $\chi$ of 
 the \emph{combinatorial Carter surface} \cite{carter1991cic} 
 associated with the knot diagram is computed as the number of faces (cycles) 
 minus the number of edges plus the number of vertices.
In the context of Gauss words, the number of edges is the length of the word, 
 the number of vertices is the number of crossings (half the length of the word) 
 and the number of faces can be found by implementing traversal rules below.
\begin{lemma}~\label{T_Faces}
A two-way deterministic register automaton can traverse all faces 
 containing a crossing $i$ in the Carter surface associated with a knot diagram.
\end{lemma}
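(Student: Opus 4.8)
The plan is to simulate, step by step, the standard face-tracing walk of the combinatorial map underlying the Carter surface, exploiting two features of a signed Gauss word $w=(a_1,b_1)\dots(a_{2n},b_{2n})$: the edges of the diagram are exactly the arcs joining cyclically consecutive positions of $w$, and every label $b\in\Nat$ occurring in $w$ occurs precisely twice, once with type $O$ and once with type $U$. Since the knot is oriented, the occurrence of a label at position $p$ carries an incoming edge-end, the arc $(p-1,p)$, and an outgoing edge-end, the arc $(p,p+1)$; hence the four \emph{darts} incident to a crossing are indexed by the pair (occurrence-type $\in\{O,U\}$, direction $\in\{\mathrm{in},\mathrm{out}\}$), which is finite-state data once the label of that crossing is held in a register. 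Tracing a face amounts to iterating the permutation $\sigma\circ\alpha$, where $\alpha$ is the edge-involution ``move to the other end of the current arc'' — realised by moving the head one cell left or right and flipping the in/out flag, with the cyclic wrap handled at the endmarkers — and $\sigma$ is the local ``turn to the next dart around the crossing'' rule of \cite{kurlin2006gpr}, a finite table depending only on the common sign of the crossing and on which of its two darts is the over-dart, all of which the automaton can read off locally.

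The automaton then proceeds as follows. It stores the target label $i$ in a register $R_1$ (its head is assumed to start on an occurrence of crossing $i$). For each of the four darts $\delta$ at $i$ in turn, it records the type and in/out flag of $\delta$ in its finite control and traces the face through $\delta$: it applies $\alpha$ (one head move plus a flag flip), landing on a dart at some crossing $c$; it reads $c$ together with its type, copies $c$ into a register $R_2$, scans $w$ to find the unique other occurrence of $c$ and reads that occurrence's type; it applies the finite rule $\sigma$ to obtain the outgoing dart, which sits at the occurrence of $c$ of a now-determined type, and re-positions the head there by one more scan for the unique symbol with value $c$ and that type; and after each such step it tests whether the current dart equals $\delta$ (value equal to $R_1$, stored type, stored flag), stopping this trace and moving to the next starting dart if so. After all four starting darts are processed the automaton halts. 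Every step is deterministic, uses only $O(1)$ registers (essentially $R_1$ holding $i$ and $R_2$ holding the crossing currently being resolved), and each trace terminates because it follows a finite orbit of the face-tracing permutation; hence a two-way deterministic register automaton suffices. If the intended application needs each face reported exactly once, it is enough, before reporting the face of $\delta$, to re-trace it and discard it whenever it meets a dart at $i$ preceding $\delta$ in the fixed order of the four darts, which is again finite bookkeeping.

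The main obstacle is justifying that no positional counter is ever needed even though the walk must leave its current cell — to read the partner occurrence of a crossing and to look up the turning rule — and then reliably return to the correct dart. This is precisely where the combinatorics of Gauss words enters: because each label appears exactly twice and with opposite over/under types, the pair (value, type) is a unique address of a tape cell, so the head can always re-navigate by a single deterministic scan; consequently the whole walk runs on finitely many registers and finite state. The remaining work — writing out the rule $\sigma$ for each sign and each over/under configuration and verifying that $\sigma\circ\alpha$ does trace the faces of the combinatorial Carter surface of \cite{carter1991cic,kurlin2006gpr} — is routine, and I would only sketch it.
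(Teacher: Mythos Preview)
Your proposal is correct and follows essentially the same approach as the paper: both trace faces by iterating a local ``turn-left'' rule whose finite data (over/under, sign, direction) lives in the state while a register holds the current crossing label, relying on the fact that the pair (label, type) uniquely addresses a tape cell so the partner occurrence can always be found by a deterministic scan. The only difference is presentational --- you phrase the walk in the language of combinatorial maps (darts, $\sigma\circ\alpha$) whereas the paper writes out the eight rules $(D,S)\to(S',D')$ with $D\in\{\mbox{Left},\mbox{Right}\}$ explicitly --- but the underlying construction, the four starting configurations at crossing $i$, and the termination test are the same.
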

\begin{proof}

The traversal of a face containing a crossing $i$ in the knot diagram 
 can be done by choosing an initial direction and turning left 
 at each consecutive visited crossing starting from $i$.
This global property of ``turning left'' can be defined by 
 a deterministic set of traversal rules that take into
 account only local property of the current crossing and 
 some finite information about the previously visited one.
In general we have 8 cases since there are 2 types of crossings (positive and negative)
 and 4 directions from which we can approach each crossing, see Figure~3.
From a topological point of view, we take the graph defined by a Gauss word and 
 attach faces to cycles always turning left at every vertex, which leads to a surface.
If the resulting surface is a 2-dimensional sphere whose Euler characteristic is 2 then 
 the given Gauss word is planar since we have embedded the graph into the plane,
 see more details in \cite[Lemma~3.2]{kurlin2006gpr}.

\begin{center}
\includegraphics[scale=0.35]{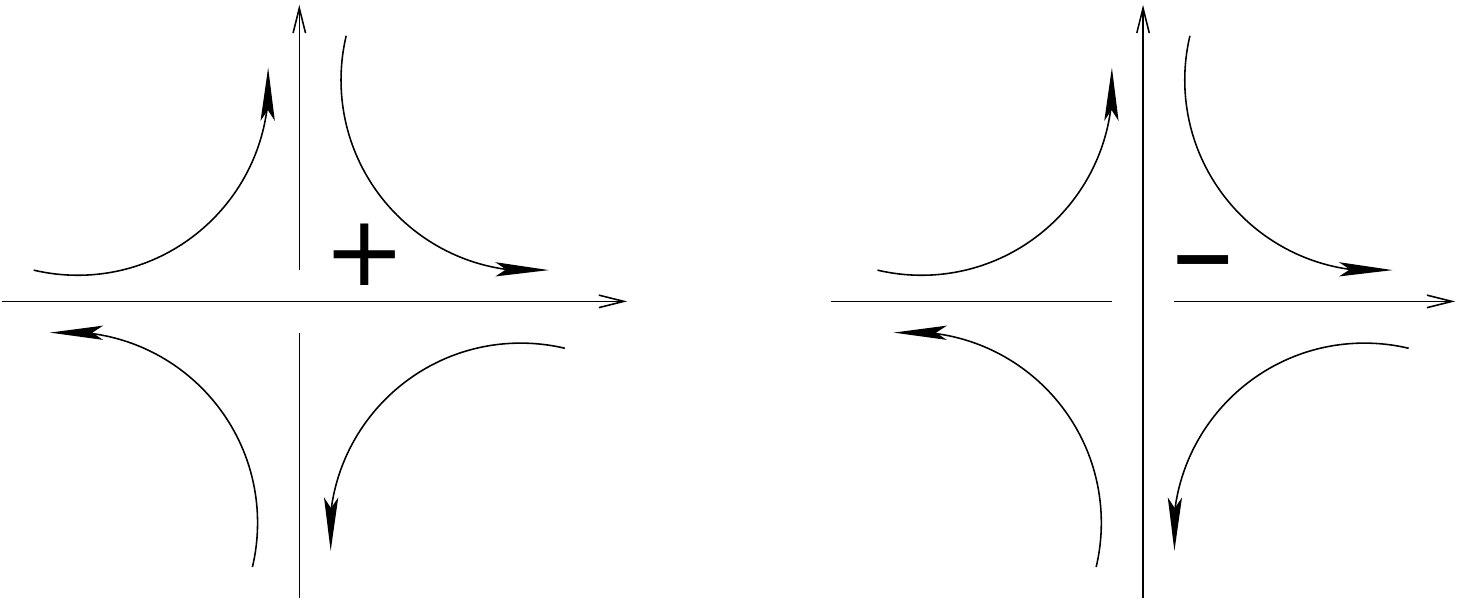}\\
\textbf{Fig. 3.} Automaton moves
\end{center}

We follow the interpretation of the local rules 
 for selecting cycles defined in \cite{kurlin2006gpr}, but
 present them here in slightly different notation, which is 
 more appropriate for the design of  a register automaton.
A register automaton that is observing a current symbol $S$, needs to choose a correct symbol
 corresponding to the next crossing after turning (geometrically) to the left on a knot diagram.
In fact on the Gauss code it will correspond to finding $S'$ that is the counterpart of $S$ 
 and then choosing a symbol which is either a left or right neighbour of $S'$. 

For example, if $S$ is $O_i$ ($U_i$, respectively) with any sign $\varepsilon=\pm$ then 
 we choose a neighbour of $U_i$ ($O_i$, respectively) with same sign $\varepsilon$ in the Gauss word. 
Geometrically, taking the right neighbour in the Gauss word is equivalent to going in the direction of the orientation of 
 the corresponding knot diagram, while taking the left neighbour means moving in the direction opposite to the orientation. 
In order to define whether we need a neighbour from the left or from the right side 
 we need to know the current type of the crossing which is $S$ and the information about 
 the previous choice of direction, i.e. whether $S$ was chosen as a left or a right symbol.
Now we define eight rules in the form  $(D,S) \rightarrow (S',D')$, where 
 $D,D'\in \{\mbox{Right},\mbox{Left}\}$ and $S,S'\in \{U,O\} \times \mathbb{N} \times \{+,-\}$.
Each rule can be read as follows: if the current symbol $S$ is reached via direction $D$ 
 then find $S'$ (counterpart of $S$) and move one step in the specified direction.

\begin{center}
\begin{tabular}{l r}
 (Right, O$_{i}$$^{+}$)$\rightarrow$ (U$_{i}$$^{+}$, Right)&
\hspace{25pt} (Right, U$_{i}$$^{+}$)$\rightarrow$ (O$_{i}$$^{+}$, Left)\\
 (Left, O$_{i}$$^{+}$)$\rightarrow$ (U$_{i}$$^{+}$, Left) &
\hspace{25pt} (Left, U$_{i}$$^{+}$)$\rightarrow$ (O$_{i}$$^{+}$, Right)\\
 (Right, U$_{i}$$^{-}$)$\rightarrow$ (O$_{i}$$^{-}$, Right) &
\hspace{25pt} (Right, O$_{i}$$^{-}$)$\rightarrow$ (U$_{i}$$^{-}$, Left) \\
 (Left, U$_{i}$$^{-}$)$\rightarrow$ (O$_{i}$$^{-}$, Left) &
\hspace{25pt} (Left, O$_{i}$$^{-}$)$\rightarrow$ (U$_{i}$$^{-}$, Right) \\
\end{tabular}
\end{center}

The first rule $(\mbox{Right}, O_{i}^{+})\to(U_{i}^{+},\mbox{Right})$ 
 is illustrated in the left picture of Figure~3, where we come to 
 the positive $i$-th overcrossing ($O_{i}^{+}$) in the horizontal direction according 
 to the orientation of the knot (Right), turn left and go out of the positive $i$-th undercrossing 
 ($U_{i}^{+}$) in the vertical direction also coinciding with the orientation (Right).
The remaining 7 rules correspond to other 7 curved arrows 
 showing `the left turn' in Figure~3.
The sign of a crossing does not change, while 
 any overcrossing is replaced by an undercrossing and vice versa.

Following the rules above, we can design a register automaton that
keeps the finite information about its previous choice of direction
(Right or Left) in its state space and chooses the Right
 or Left symbol of $S'$ after observing the symbol $S$. 
It can also keep records on which rule was applied to the starting symbol $S$ 
 and will terminate the traversal of a face if the same rule will be
 applied for $S$ again.  The fact of the repetition corresponds to
the completion of a cyclic path. In order to traverse all faces
which are adjacent to a crossing $i$, we need to start from two
different initial conditions associated with labels ($O_i$ or
$U_i$) and two different initial direction (Left or Right).\qed
\end{proof}

\begin{lemma}
Two-way register automata with $k$ registers on the input with $t$ distinct symbols 
 can simulate a counter machine with $k$ counters bounded by $t$.
\end{lemma}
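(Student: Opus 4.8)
The plan is to encode the state of a $k$-counter machine whose counters are bounded by $t$ using the $k$ registers of a two-way register automaton, together with the finite control, and to use the $t$ distinct symbols appearing on the input tape as a linearly ordered ``ruler'' against which counter values are measured. Concretely, I would assume that the input word contains $t$ distinct data values that the automaton can locate in a fixed canonical order --- for instance, the first occurrences of $t$ distinct symbols, scanned left to right --- so that ``the $m$-th symbol'' is a well-defined notion the automaton can navigate to by two-way motion. The invariant maintained during the simulation is: if counter $c$ currently holds value $m \in \{0,1,\dots,t\}$, then register $c$ holds the data value of the $m$-th distinct symbol on the tape (with $m=0$ represented by the left endmarker $\vartriangleright$, say).

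Next I would describe how each counter-machine instruction is simulated. A test ``is counter $c$ zero?'' is implemented by comparing register $c$ against the endmarker, using a type-1 transition. An increment of counter $c$ is implemented by: remembering the current contents of register $c$ in the finite control is impossible (the value is from an infinite alphabet), so instead the automaton walks along the tape to find the position holding the value currently in register $c$, then continues scanning in the canonical order to the next distinct symbol, and copies that symbol into register $c$ via a type-3 (modified) transition; by Lemma~\ref{lemma:modified} this is legitimate. A decrement is symmetric: locate the value in register $c$, then move to the previous distinct symbol in the canonical order and store it. Since the counters are bounded by $t$, the ``next distinct symbol'' always exists when an increment is attempted from a value below the bound, and the automaton rejects (or the counter machine would have overflowed) otherwise. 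The finite control of the register automaton carries the finite control of the counter machine plus a bounded amount of bookkeeping for the scanning subroutines (current direction, which rule/state we are resuming, a counter modulo a constant for walking through the canonical enumeration), all of which is finite.

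The main obstacle, and the step I would spend the most care on, is making the notion of ``the $m$-th distinct symbol'' robustly computable by a two-way register automaton with only $k$ registers --- in particular, enumerating the distinct values in a canonical order and recognising when one has reached the $(m{+}1)$-st distinct value while standing on the position holding the $m$-th. The natural approach is: to advance from ``currently pointing at the value $v$'' to ``the next distinct value'', sweep rightward from the left endmarker counting, in the finite control, how many distinct values have been seen before reaching $v$ --- but counting distinct values is itself unbounded unless we exploit that we only ever need to move by one step in the canonical order. So instead I would fix the canonical order to be ``order of first appearance'' and implement the successor operation locally: from the leftmost occurrence of $v$, scan right looking for the first symbol whose value does not equal $v$ and does not equal any value occurring strictly to its left in the word; the latter test is the delicate part, since it requires, for a candidate position $p$, checking that the symbol at $p$ is fresh, which a register automaton can do only by re-scanning the prefix. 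I would argue this is a bounded-register computation by using one scratch register to hold the candidate value and re-reading the prefix, accepting the quadratic time cost (the lemma only claims simulability, not efficiency). Once this successor/predecessor primitive is in place, the rest of the simulation is a routine translation of counter-machine instructions into finite-control transitions, and correctness follows by induction on the number of simulated steps using the register invariant stated above.
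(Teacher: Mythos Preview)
Your proposal is correct and follows essentially the same approach as the paper: encode counter value $m$ by the $m$-th distinct data value in order of first appearance on the tape, and implement increment/decrement by navigating to the next/previous first-occurrence position. The paper's proof is terser---it represents zero by the first input symbol rather than the left endmarker, and it does not spell out the freshness test---but the idea is identical; your discussion of the successor primitive and the scratch register it requires simply fills in a detail the paper leaves implicit.
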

\begin{proof}
Let us assume that a word on an input tape has at least $t$
distinct symbols. The value of a counter stored in a register $i$
corresponds to the number of distinct symbols from the beginning
of the word till the position of the first appearance of symbol
stored in the register. Then we can increase (decrease) the value
by looking for the next (previous) symbol on the string that will
appear for the first time. Counter $i$ is equal to zero if the
value stored in the  register $i$ is the first symbol on the input
tape. If we use $k$ registers then we can store $k$ counters
bounded by $t$, where $t$ is a number of distinct symbols on the
input tape.\qed
\end{proof}

\begin{lemma}
Two-way deterministic register automata can compute 
 the Euler characteristic of the Carter surface associated to 
 a signed Gauss word according to the rules from the proof of Lemma~2.
\end{lemma}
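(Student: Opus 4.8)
The plan is to carry out the computation $\chi = F - E + V$ of the paper's recipe on a two-way deterministic register automaton by running the face-traversal procedure of Lemma~\ref{T_Faces} inside an outer loop, while using one register as a signed bounded counter in the sense of the preceding (counter-machine) lemma. For a valid signed Gauss word $w$ encoding a knot with $n$ crossings the edge count is $E=|w|=2n$ and the vertex count is $V=|w|/2=n$ (validity of $w$ can be checked beforehand by a two-way deterministic register automaton, as shown above, so that all ``counterparts'' $S'$ used by the traversal rules are well defined); hence only the face count $F$ requires real work. The automaton uses a constant number of registers: those of the Lemma~\ref{T_Faces} traversal, a few scratch registers to compare positions on the tape, and one distinguished register which, following the counter-machine lemma, holds a counter bounded by the number $t=2n$ of distinct data symbols of $w$, with its sign kept in the finite control.

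The crux is to count every face exactly once. Call a \emph{corner} a pair consisting of a tape position and an incoming direction in $\{\mbox{Right},\mbox{Left}\}$, and order corners lexicographically, positions from left to right and $\mbox{Right}$ before $\mbox{Left}$. By the proof of Lemma~\ref{T_Faces} the eight local rules partition the $4n$ corners into cyclic sequences, one per face, so each face has a unique smallest corner. First I would have the automaton obtain $V$ by one left-to-right scan counting distinct index values, put it in the counter, and flip the sign bit, so the counter holds $-E+V=-V$. Then the outer loop visits every corner of $w$ in the order above (scan the $2n$ positions, at each trying incoming direction $\mbox{Right}$ then $\mbox{Left}$); for the current corner $c$ the automaton records the identity of $c$ (its register value together with an $O/U$ flag), runs the Lemma~\ref{T_Faces} traversal starting from $c$, and at every visited corner $c'$ decides whether $c'<c$: it temporarily stores $c'$, performs the position comparison by a short walk towards $\vartriangleright$ searching for the stored symbol, walks back to $c'$, and resumes the traversal. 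If some $c'<c$ is ever found, the automaton abandons this traversal and moves the outer loop on without touching the counter; if the traversal returns to $c$ (the termination condition already built into Lemma~\ref{T_Faces}) without having met any smaller corner, then $c$ is the minimal corner of its face, so the counter is incremented by one before the outer loop advances. Thus the counter is incremented exactly once per face, and after the outer loop it holds $-V+F$. Throughout the run this value stays in $[-V,F-V]=[-n,\chi]\subseteq[-n,2]$, well within the bound $t=2n$, so no overflow occurs; the degenerate case $n=0$ is handled in the finite control.

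Since the final counter value is $F-E+V=\chi$, the two-way deterministic register automaton has computed the Euler characteristic of the Carter surface of $w$, as required. I expect the main obstacle to be exactly the double-counting elimination: the automaton must decide, while the face traversal jumps all over the tape, whether the current corner precedes the fixed starting corner, which forces it both to carry a stored identity of that starting corner and to interleave two-way position comparisons with the traversal and then restore the traversal state. The remaining ingredients are routine and already available — the eight turning rules from Lemma~\ref{T_Faces}, and increment, decrement and zero-test (hence negation and the additions used above) on a counter bounded by $t$ from the counter-machine lemma.
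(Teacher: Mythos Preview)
Your argument is correct and follows the same skeleton as the paper's proof: invoke the face-traversal rules of Lemma~\ref{T_Faces} inside an outer loop, use the counter-machine lemma to maintain bounded counters, and avoid double-counting each face by a minimality test along its traversed cycle. The difference is one of granularity. The paper iterates over the $n$ \emph{crossings}: for each crossing $i$ it traverses the adjacent faces and increments the face counter only when the face contains no crossing of smaller index, using an auxiliary ``crossing $i$ seen four times'' count to detect when all corners at $i$ have been exhausted; it keeps $V$, $E$, $F$ in three separate counters and subtracts at the end. You instead iterate directly over the $4n$ \emph{corners} (position, incoming direction), which makes the partition of corners into face-cycles explicit and dispenses with the ``seen four times'' bookkeeping, and you keep a single signed running counter for $\chi$ rather than three counters. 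Both routes are equally valid; your corner-based minimality test is arguably cleaner, while the paper's separate counters make the boundedness check more immediate. One cosmetic point: when you say you record $c$ by ``its register value together with an $O/U$ flag'', strictly the $\Sigma$-component is one of $\{U^+,U^-,O^+,O^-\}$, but since each index carries a unique sign in a valid signed Gauss word, the $O/U$ bit together with the index indeed pins down the tape position.
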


\begin{proof}
To compute the Euler characteristic we count the numbers of edges and vertices 
 in  the graph $G$ represented by a signed Gauss word.
Geometrically $G$ is the underlying graph of the knot diagram encoded by 
 the Gauss word and all its vertices (crossings of the diagram) have degree 4 only. 
The number of vertices in $G$ is the number of distinct symbols, 
 while the number of edges is twice as much.
Both values can be counted in a straightforward way. 
The number of faces attached to the graph in the combinatorial Carter surface 
 can be counted by traversing $G$ in the following way. 
The automaton goes sequentially through the list of vertices. For each vertex $i$ it traverses  
 (as described in Lemma \ref{T_Faces}) all adjacent faces  and increases 
 the counter by one for every face $F$ not containing vertices with indices less than $i$.
Also the automaton  counts how many times the crossing $i$ is met during 
 the traversal of faces adjacent to $i$. As soon as the value  reaches
 $4$ the automaton starts the traversal for the next crossing.
The computation of the Euler characteristic $\chi$ is done by
counting the values for edges, vertices and faces in individual
counters and then by subtracting number of edges from the sum of
the numbers of vertices and faces. Since the number of each value
in counters is bounded by the number of distinct symbols, the
computation can be done by the two-way deterministic register
automaton.\qed
\end{proof}

\begin{theorem}
The language of planar signed Gauss words can be recognised 
 by two-way deterministic register automata.
\end{theorem}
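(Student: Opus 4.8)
The plan is to assemble the theorem directly from the three lemmas that precede it, since they have been arranged to provide exactly the components of the algorithm of \cite{kurlin2006gpr}. Recall that a signed Gauss word is planar if and only if the associated combinatorial Carter surface is a 2-sphere, i.e. has Euler characteristic $\chi = 2$; this is the characterisation proved in \cite[Lemma~3.2]{kurlin2006gpr}, which I will quote. So the automaton need only compute $\chi$ and then compare the resulting value against the constant $2$.

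First I would note that Proposition~4 already gives a two-way deterministic register automaton recognising the language $L_{SGW}$ of well-formed signed Gauss words, so on an arbitrary input the automaton can first verify membership in $L_{SGW}$ and reject otherwise; this ensures the graph $G$ is well-defined (every crossing appears exactly twice, with a consistent sign). Next, by Lemma~5 the automaton computes the Euler characteristic $\chi = V - E + F$ of the Carter surface, keeping $V$, $E$ and $F$ in registers viewed as counters bounded by the number of distinct index symbols (Lemma~3 guarantees that this counter simulation is available, and Lemma~2 guarantees that the face-traversal needed for $F$ is realisable by such an automaton). Finally, since $\chi$ is stored as a counter whose value never exceeds the number of crossings, the automaton tests whether $\chi = 2$: concretely, $\chi = 2$ means the register holding $\chi$ contains the second distinct symbol of the input tape, which is a property the automaton can check by scanning from the left delimiter. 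If the test succeeds it enters an accepting state; otherwise it rejects. Composing these deterministic two-way register automata (which is possible because the class is closed under the sequential composition used here — each sub-automaton leaves the head at a known position and passes a bounded amount of state plus the register contents to the next) yields a single two-way deterministic register automaton recognising the language of planar signed Gauss words.

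The only genuinely delicate point, and the one I would spell out with some care, is that the quantity $\chi = V - E + F$ can in principle be negative (for Gauss words of high genus), whereas register-simulated counters as set up in Lemma~3 naturally represent non-negative values bounded by $t$, the number of distinct symbols. I would handle this by never forming $V + F - E$ as a signed number: instead the automaton computes $V + F$ in one counter and $E$ in another, then compares the two counters for equality-after-offset, i.e. checks directly whether $V + F = E + 2$. Equality and "greater by a fixed constant" comparisons of two such counters are straightforward for a two-way register automaton (walk both pointers forward in lockstep through the distinct symbols of the tape), so no signed arithmetic is ever needed. With that subtlety dispatched, the remainder is bookkeeping: confirming that the head-position and register-content conventions of Propositions~4 and Lemmas~2--5 are mutually compatible so that the automata can be chained, which I would state but not belabour.

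Thus the proof is essentially a one-paragraph assembly: verify $w \in L_{SGW}$ (Proposition~4), compute $V$, $E$, $F$ as bounded counters (Lemmas~2, 3, 5), and accept iff $V + F = E + 2$, invoking \cite[Lemma~3.2]{kurlin2006gpr} for correctness of the planarity criterion. \qed
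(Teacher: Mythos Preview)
Your proposal is correct and follows essentially the same approach as the paper: invoke the preceding lemmas to compute the Euler characteristic of the Carter surface by a two-way deterministic register automaton, then accept iff $\chi=2$, citing \cite{kurlin2006gpr} (and \cite{carter1991cic}) for the planarity criterion. The paper's proof is a terse two-sentence version of this; your additional remarks on first verifying well-formedness, on avoiding signed arithmetic by testing $V+F=E+2$ directly, and on composability of the sub-automata are reasonable elaborations of points the paper leaves implicit, but they do not constitute a different route.
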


\begin{proof}
Compute the Euler characteristics by the two-way deterministic
register automaton. If the Euler characteristics $\chi$ is equal to 2, 
 i.e. the combinatorial Carter surface is a sphere, then a signed Gauss word is planar
\cite{carter1991cic,kurlin2006gpr}.\qed
\end{proof}

\subsection{Unsigned Gauss words}
As for signed words, an unsigned Gauss word $w$ is called \emph{planar} if 
 $w$ is a Gauss word read off a planar projection of a classical knot in ${\Real}^3$.
In \cite{lisitsa2009automata}, we showed that an algorithm for the decision 
of  planarity of  unsigned Gauss words , proposed by 
Kauffman in  \cite{kauffman1998vkt},  can be implemented by 
Linear Bounded Memory Automata. 
We also conjectured in \cite{lisitsa2009automata} that both deterministic and non-deterministic register automata 
 (over an infinite alphabet)  are not powerful enough  for the recognition of planarity in unsigned case.  
In this section  we show, that, surprisingly to us, the \emph{non-planarity} of unsigned Gauss word can be 
 recognised by non-deterministic register automata, and, therefore, for recognition of planarity the finite memory 
 and co-non-determinism are sufficient. Notice, that it is unknown whether the class of languages recognised by NRA is 
 closed under complementation, so the original conjecture from \cite{lisitsa2009automata} still stands.     

As the main result of this section, we show that the algorithm for non-planarity of 
 unsigned Gauss words from \cite{cairns1996ppi} is implementable in NRA.

We start with definitions which will be used to  
describe the main steps of the algorithm from  \cite{cairns1996ppi}.

\begin{definition}
For a Gauss word $w$, denote by $\alpha_i(w)$ the number of symbols that occur in $w$ 
 in cyclic order between the symbols U$_{i}$ and O$_{i}$, taken modulo 2.  
\end{definition}
 
Notice that, due to the fact that every label appears twice in a Gauss word,  
 in the above definition one can swap U$_{i}$ and O$_{i}$, so 
 the definition of $\alpha_i(w)$ will not  be affected. 

\begin{definition}
Let $S_{i}$ denote the subset of symbols that occur in $w$ in cyclic order between, 
 either the symbols $U_{i}^{+}$ and $O_{i}^{+}$, or $O_{i}^{-}$ and $U_{i}^{-}$.
Let $\bar{S}_{i}$ denote $\{U_{i}^{\pm},O_{i}^{\pm}\}\cup S_{i}$ and.
 S$_{i}$$^{-1}$ denote the set S$_{i}$ after reversing the first component of each letter, i.e. from U to O and vice versa. 
Denote by $\beta_{ij}(w)$ the number of symbols in the intersection $\bar{S}_{i}$ $\cap$ S$_{j}$$^{-1}$.
 
\end{definition}
\begin{definition}
Given an unsigned Gauss word $w$, the vertices of the interlacement graph $G(w)$ are labels in $w$ and the edges of $G(w)$ are the pairs of labels $(i,j)$ such that the labels $i$ and $j$ are interlaced ($i$ occurs once between two occurrences of $j$ and vice versa). 
\end{definition}

\begin{definition}
Given a unsigned Gauss word $w$,a signing of $w$ is a mapping from the set of labels in $w$ to the set of signs $\{+,-\}$. 
\end{definition}

\begin{example}
Given $w = U_1O_3U_4U_2O_1U_5O_2U_3O_5O_4$, the interlacement graph $G(w)$ of $w$ is shown in figure 4.  
Compute the values of $\alpha_i(w)$ and $\beta_{ij}(w)$, for $i = 1$ and $j=2$. 
We have   $\alpha_1(w) = |\{O_3U_4U_2\}|=3\equiv 1 \pmod{2}$ and 
 $\beta_{12}(w) = |\bar{S}_{1} \cap S_{2}^{-1}| = \{U_1O_3U_4U_2O_1\}\cap \{U_1O_5\} = |\{U_1\}|=1$
\begin{center}
\includegraphics[scale=0.35]{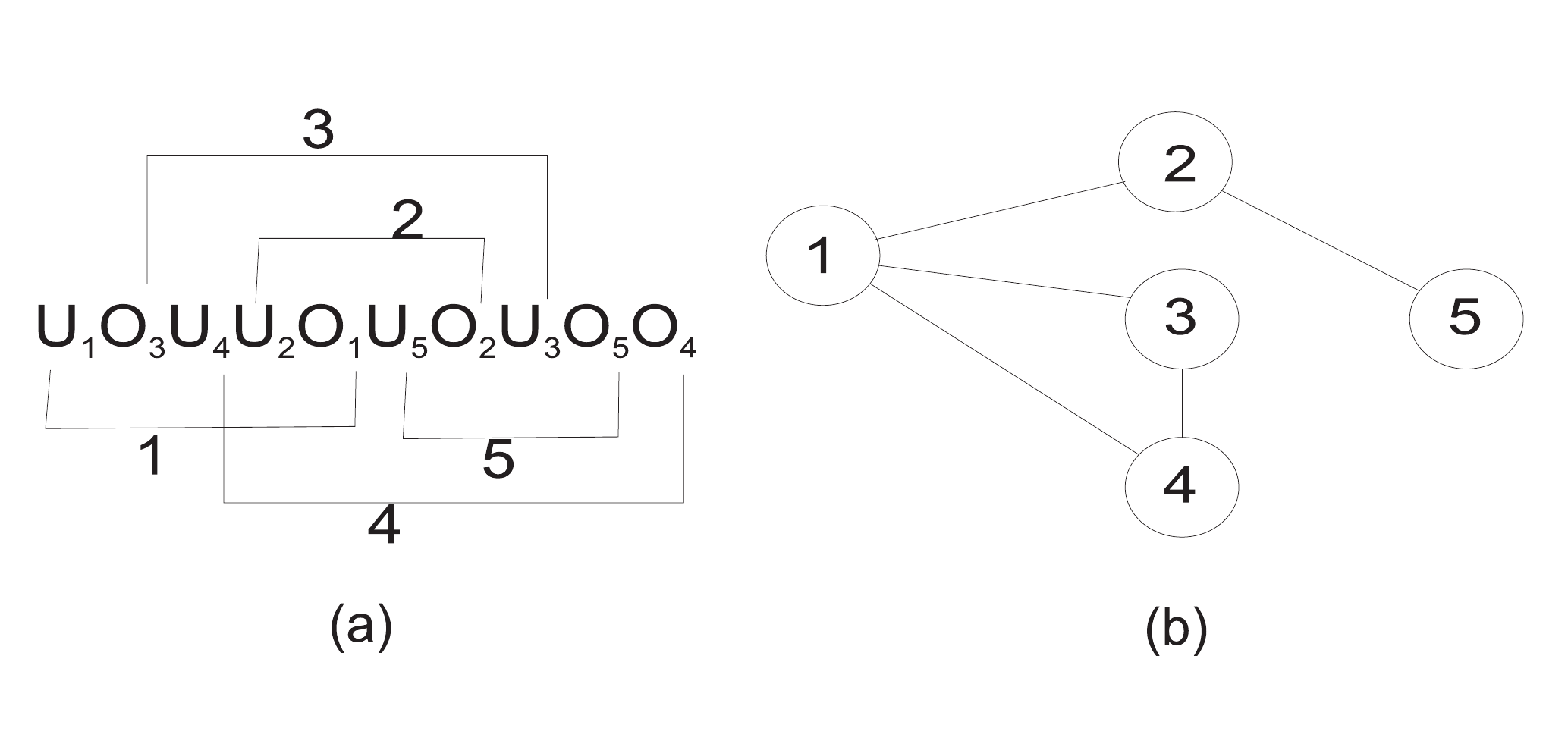}
$\newline$
\small{ \textbf{Fig. 4.} Non-planar Gauss word $w$ and its corresponding interlacement graph $G(w)$}
\end{center}
\end{example}

For each edge $e_{ij}$ in $G(w)$, we assign the number $\beta_{ij}(w)$ (Mod 2) $\in$ $\Zed_2$. 
According to \cite{cairns1996ppi} that assignment defines a $\Zed_{2}$ $1$-cochain $B(w)$ and 
 the property that the Cairns-Elton algorithm checks is whether this co-chain is closed. 
 For the purpose of this paper we need only characterisation of the closeness of $B(w)$ 
 in terms of notions we have already introduced: $B(w)$ is closed if and only if 
 for every closed path $P_i$ in $G(w)$, the sum of the numbers $\beta_{ij}(w)\equiv 0\pmod{2}$. 

Given an unsigned Gauss word $w$, the Cairns-Elton algorithm \cite{cairns1996ppi} proceeds by checking that 

\begin{enumerate}
\item $\alpha_i(w)$ is $0$ for every $i$ in $w$, 
\item $\beta_{i,j}(w)$ is even whenever $\bar{S_i}$ and $\bar{S_j}$ do not interlace and 
\item For any signing $w'$ of $w$, the co-chain $B(w)$ is closed. 
\end{enumerate} 

If conditions 1,2 and 3 are satisfied then the algorithm returns "the word $w$ is planar", 
 otherwise if any of the conditions is not satisfied, the algorithm returns "the word $w$ is non-planar".
The checking of first two conditions can be implemented by DRA. For the third one, 
 according to \cite[Lemma~1]{cairns1996ppi}, the closeness of $B(w)$ depends only on the unsigned word. 

This means we do not require to check every signing of $w$, however we are required to compute the sum of numbers assigned to the edges of every closed path in $G(w)$ and this does not seem to implementable using finite memory. Therefore to overcome this we can reformulate the conditions for the inverse property and then check with a NRA on whether there exists a closed path in $G(w)$ such that the sum of the numbers assigned to its edges is odd. 

Given an unsigned Gauss word $w$, the algorithm proceeds in three stages.

\begin{enumerate}

\item The input word is checked on whether the number of neighbours of $v_i$ is odd for any $v_i$ in $G(w)$. 
If "yes", the algorithm stops with the result "the input word is non-planar", otherwise the algorithm proceeds to the second stage.

\item The input word is checked on whether  $\beta_{ij}(w)$ is odd for any pair of vertices $(v_i,v_j)$  in $G(w)$ that are not connected by an edge.  If "yes", the algorithm stops with the result "the input word is non-planar", otherwise the algorithm proceeds to the third stage.

\item For an arbitrary signing $w'$ of $w$, the input word is checked on whether there exists a closed path in $G(w)$ such that the sum of the numbers assigned to its edges is odd.  If "yes", the algorithm stops with the result "the input word is non-planar", otherwise the algorithm stops with the result "the input word is planar".

\end{enumerate}
  
\begin{theorem}
The language of non-planar unsigned Gauss words can be 
 recognised by a two-way non-deterministic register automaton.
\end{theorem}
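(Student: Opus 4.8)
The plan is to implement the three-stage non-planarity test described just above the theorem using a two-way non-deterministic register automaton (NRA), handling the first two stages deterministically and using non-determinism only for the third stage. Throughout, we work over the data-word alphabet $\Sigma \times \Nat$ with $\Sigma = \{U,O\}$, and we freely use the earlier observation (Lemma on counter machines) that a two-way register automaton with $k$ registers can simulate $k$ counters bounded by the number of distinct symbols, so quantities like $\alpha_i(w)$, $\beta_{ij}(w)$, and degrees in $G(w)$, all of which are bounded by $|w|$, can be counted and compared modulo $2$ in finite control. We may also assume, by the Proposition on $L_{GW}$, that a preliminary pass verifies the input is a genuine unsigned Gauss word.

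First I would build the Stage 1 sub-automaton: for each label $i$ (iterate by scanning for first occurrences of fresh symbols, as in the counter-machine simulation), store the two positions carrying index $i$ in two registers, then sweep between them counting symbols strictly between $U_i$ and $O_i$ in cyclic order, keeping only the parity bit; this parity is exactly $\alpha_i(w)$, equivalently the parity of the degree of $v_i$ in $G(w)$. If any $\alpha_i(w)$ is odd, enter an accepting state (the word is non-planar). Stage 2 is similar but uses the signed-type sets $\bar S_i$ and $S_j^{-1}$ of Definitions for $\beta_{ij}$: fix an arbitrary signing implicitly (by Lemma~1 of \cite{cairns1996ppi} the relevant parities do not depend on it), and for each ordered pair $(i,j)$ of non-interlaced labels, compute $\beta_{ij}(w) \bmod 2$ by a double sweep — one register group pinning the block $\bar S_i$, another locating each symbol of $S_j$ and testing membership after the $U\leftrightarrow O$ reversal — and accept if this parity is odd for some such pair. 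Interlacement of $i$ and $j$ is itself a finite-memory check (compare the cyclic positions of the two copies of $i$ against those of $j$).

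For Stage 3 I would use non-determinism to guess a closed path $P$ in the interlacement graph $G(w)$ edge by edge and accumulate the parity $\sum_{e_{ij}\in P}\beta_{ij}(w) \bmod 2$ in one bit of finite control, accepting iff a closed path with odd total is found. Concretely, the automaton non-deterministically picks a starting label $i_0$, stores its index in a "start" register; then repeatedly: holding the "current" label $i$ in a register, it non-deterministically guesses the next label $j$, verifies that $e_{ij}$ is an edge of $G(w)$ (i.e.\ $i$ and $j$ interlace — a two-way finite-memory sub-routine), computes $\beta_{ij}(w)\bmod 2$ as in Stage 2 and XORs it into the running parity bit, then moves $j$ into the "current" register; at any point it may guess that the path is complete and check that the current label equals the start label. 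If the path closes with running parity $1$, accept. Since $G(w)$ has at most $|w|$ vertices, no bound on path length is needed for correctness of an NRA (it may revisit vertices), and a path with odd $\beta$-sum exists iff the cochain $B(w)$ is not closed, which by the discussion above is exactly the failure of condition 3.

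The main obstacle is Stage 3's use of only a bounded number of registers: the automaton must re-verify an edge and recompute $\beta_{ij}(w)\bmod 2$ from scratch each time it extends the guessed path, using only the two current indices $i,j$ plus a constant number of scratch registers, rather than storing the whole path. The key point making this work is that the running parity $\sum \beta_{ij}(w) \bmod 2$ is a single bit kept in finite control, and each $\beta_{ij}(w)\bmod 2$ is a local quantity recomputable by a two-way sweep from the two index values alone, so no history beyond the start label and the current label need be retained. The remaining routine matters — correctly handling cyclic order at the wrap-around of $w$, packaging the $\bar S_i$/$S_j^{-1}$ membership tests, and wiring the three stages in sequence with the word-validity precheck — I would carry out exactly as in the proofs of the preceding lemmas, and omit the obvious details.
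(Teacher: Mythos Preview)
Your proposal is correct and follows essentially the same approach as the paper's proof: implement Stages~1 and~2 deterministically by storing the relevant one or two indices in registers and sweeping to compute the needed parities in finite control, then use non-determinism in Stage~3 to guess a closed walk in $G(w)$ edge by edge, recomputing $\beta_{ij}(w)\bmod 2$ at each step from the two indices alone and accumulating a single parity bit, accepting when the walk closes with odd sum. The only cosmetic difference is that in Stage~2 the paper checks the parity of $|N(v_i)\cap N(v_j)|$ for non-adjacent pairs rather than $\beta_{ij}(w)$ directly, but this is the same condition and your version is closer to the algorithm as stated.
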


\begin{proof}
Let $w$ be an unsigned Gauss word and $G(w)$ be the interlacement graph of $w$. 
Denote by $N(v_{i})$ the set of all neighbours of a vertex $v_i\in G(w)$.
For the first condition, the automaton will check the number $|N(v_{i})|$ of neighbours of each vertex $v_{i}\in G(w)$.
It will store in the register the first occurrence of the label $i$ in $w$ which corresponds to vertex $v_{i}$ 
 and store the parity of the number $|N(v_{i})|$ in finite state control of the automaton. 
To check $|N(v_{i})|$, the automaton goes through each symbol $j$ in between the pair of the labels $i$ and $i^{-1}$ in $w$ (where $i^{-1}$ represents the second occurrence of $i$ in $w$) and if $j$ occurs only once then it moves first to an odd state and then alternates between odd and even states for any further occurrences. If $i^{-1}$ is reached and the current state is odd then it moves to an accepting state.
Otherwise if, for all vertices $v_{i}\in G(w)$, the number $|N(v_{i})|$ is even then it checks condition (2).

For the second condition, the automaton will check the number  $|N(v_{i})\cap N(v_{j})|$ 
  of common neighbours for any pair of vertices that are not connected by an edge of $G(w)$.  
To verify that a vertex $v_i$ is not connected to any vertex $v_j$, the automaton stores in the registers the first occurrence of $i$ and the first occurrence of each $j$ and then it checks whether there is an even number (either 2 or 0) of occurrences of each $j$ in between $i$ and $i^{-1}$. If the number of occurrences of $j$ in between $i$ and $i^{-1}$ is even then it stores the symbol $k$ in the register (which occurs in between $i$ and $i^{-1}$) and compares it with the symbols in between $j$ and $j^{-1}$. The value of $|N(v_{i})\cap N(v_{j})|$ is stored in finite state control of the automaton. If there is a match, it will move first to an odd state and then alternate between odd and even states for any further matches.  If $i^{-1 }$ is reached and current state is odd, the automaton moves to an accepting state.
Otherwise if, for all pairs of vertices $v_i,v_j$ not connected by an edge of $G(w)$,
 the number $ |N(v_{i}) \cap N(v_{j})|$ is even then it checks condition 3.

For the third condition, in order to calculate the value of $\beta_{ij}$, we will consider the
information about the overcrossing and the undercrossing strands $(O_{i},U_{i})$ for each crossing $i$ in $w$ (to easily locate the counterpart of each symbol) and a signing $w'$ of $w$.  Since we can choose any signing $w'$, we will assume that the signing $w'$ of $w$ is positive for all crossings. 
To check that two vertices $v_{i}$ and $v_{j}$ are connected by an edge of $G(w)$, 
 the automaton keeps a copy of $i$ and $j$ in the registers and then it checks that there is only one occurrence of $j$ in between $U_i$ and $O_i$.  Now to calculate the value of $\beta_{ij}$ the automaton moves its head to find the symbol $U_j$ then compares the counterpart of each symbol $k$ in between $U_{j}$ and $O_{j}$ with the symbols in the set $\bar{S}_{i} (U_{i},\ldots{},O_{i})$.If there is a match it will move first to an odd state and then alternate between odd and even states for any further matches until $O_{j}$ is reached.  Finally to traverse a closed path in $G(w)$, the automaton will choose a vertex $v_{i}$ non-deterministically, traverse along its edge and sum up the values of $\beta_{ij}$ of each visited edge by incrementing the counter by 1 (modulo 2) if the value of $\beta_{ij}$ is odd and continue updating the counter until the same vertex $v_{i}$ is met for the second time. If $v_{i}$ is met for the second time and the value of the counter is odd then the automaton moves to an accepting state.\qed

\end{proof}

\begin{corollary}
The language of planar unsigned Gauss word can be recognised by two-way Co-Non-Deterministic register automata.
\end{corollary}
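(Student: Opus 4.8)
The plan is to derive the corollary from the preceding Theorem by a complementation argument, using that a language is recognised by a two-way co-nondeterministic register automaton if and only if its complement (taken among all data words over $\Sigma\times\Nat$, where $\Sigma=\{U,O\}$) is recognised by a two-way nondeterministic register automaton. Let $P$ be the language of planar unsigned Gauss words. Since planarity is defined only for unsigned Gauss words we have $P\subseteq L_{GW}$, so
\[
\overline{P}\;=\;\overline{L_{GW}}\;\cup\;(L_{GW}\setminus P),
\]
where $L_{GW}\setminus P$ is precisely the language of non-planar unsigned Gauss words. Thus it is enough to recognise each of the two sets on the right-hand side by a two-way NRA and then to invoke closure of the NRA-recognisable languages under union.

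The set $L_{GW}\setminus P$ is recognised by a two-way NRA by the preceding Theorem. For $\overline{L_{GW}}$ I would build an NRA directly rather than complement the deterministic automaton of the earlier Proposition, since complementing a two-way \emph{deterministic} register automaton is delicate (one has to rule out non-halting runs). A data word fails to be an unsigned Gauss word exactly when some index $n$ violates the defining condition, i.e. the symbols carrying index $n$ form neither the empty set nor the two-element set $\{(U,n),(O,n)\}$; equivalently, $(U,n)$ occurs at least twice, or $(O,n)$ occurs at least twice, or the two occurrence counts differ. An NRA guesses a position carrying such a bad index, stores that index in a register, and verifies the violation by scanning the tape two-way, since ``at least twice'' and the parities of the counts are finite-state information; if the violation is confirmed it accepts.

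Finally, the NRA-recognisable languages are closed under union: from two-way NRAs for $\overline{L_{GW}}$ and for $L_{GW}\setminus P$ one assembles a single two-way NRA that, from a fresh initial state, makes a nondeterministic choice which of the two machines to simulate (its register set being the disjoint union of the two register sets). This yields a two-way NRA for $\overline{P}$, hence a two-way co-nondeterministic register automaton for $P$, as required. I do not expect a genuine obstacle, since the substance of the corollary is already contained in the Theorem; the one point that needs care is the scope of the complement — because ``planar'' entails ``well-formed unsigned Gauss word'', $\overline{P}$ must also capture all ill-formed inputs, and that extra piece has to be handled by its own NRA rather than by naively complementing a deterministic machine.
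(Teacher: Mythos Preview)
Your proposal is correct and follows the same idea the paper intends: the corollary is stated in the paper without any proof, as an immediate consequence of the preceding theorem via the definition of co-nondeterminism. Your write-up is in fact more careful than the paper, which leaves implicit the point you single out---that the complement of ``planar unsigned Gauss word'' over all data words also contains the ill-formed strings $\overline{L_{GW}}$, so one needs an NRA for those as well and then closure under union; your direct NRA for $\overline{L_{GW}}$ (guess a bad index, store it, verify the violation with bounded counting in the finite control) and the standard union construction are both sound.
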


\section{Definability of planarity of Gauss words}

An alternative and somewhat complementary approach to the studying descriptional complexity of 
recognisability problems is that based on definability in some logic. The question of definability of 
planarity of Gauss (multi-)words has already been addressed in the recent work by B. Courcelle \cite{courcelle-diagonal}.

Before we formulate the result from \cite{courcelle-diagonal} and discuss its relation 
 to the work we present in this paper, we would like to recall the main 
results on relationships between the computational models 
we consider in this paper and  definability in logic. 
The computational  power of classical finite 
automata over finite alphabets is characterised precisely in terms of  definability in \emph{Monadic 
Second Order Logic (MSO)}, the extension of \emph{First-Order Logic (FO)} with quantification over sets.  The classical theorem of Trakhtenbrot and Elgot at al. 
from 1950s states that the languages recognisable by finite state  automata (regular languages)  are 
exactly those definable in  MSO.  The languages definable in FO  constitute an important class 
of star-free regular languages. For the case of automata over infinite alphabets the situation is much more intricate, and in general, register 
automata are orthogonal to logically defined classes.   In \cite{neven2004fsm} authors compared 
definability in $MSO^{\ast}$, suitably defined variant  of MSO which allows to define the properties of data words,  with recognizability by register and pebble 
automata. In particular they have shown that $MSO^{\ast}$ is as least as powerful as \emph{one-way non-deterministic} register automata, 
but incomparable with \emph{two-way deterministic} and \emph{non-deterministic} automata.  Pebble automata behave much better and recognisability by 
all their natural variants is covered by definability in $MSO^{\ast}$. 

 In \cite{courcelle-diagonal} B.~Courcelle proves the following theorem, which we reformulate in the terms we have adopted in the present paper.

\begin{theorem}\cite{courcelle-diagonal}
For every genus $g$, it is definable by an MSO formula the property of a Gauss (multi-)word to be a code of the self-intersecting closing curve, 
embeddable in a surface of the genus $g$. 
\end{theorem}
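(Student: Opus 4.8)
The plan is to produce, for each fixed $g$, a first-order sentence $\varphi_g$ --- \emph{a fortiori} an MSO sentence, which is all the theorem asks --- that holds in the natural relational structure of a Gauss (multi\nobreakdash-)word $w$ precisely when $w$ codes a generic closed curve that embeds cellularly in the closed orientable surface of genus $g$. First I would fix the structure $\mathfrak{M}_w$: its universe is the set of positions of $w$, equipped with the inherited linear order $<$, the unary predicates recording the letter ($U$ or $O$; signs play no role here, for the same reason they do not in condition~3 of the Cairns--Elton algorithm), and a binary relation identifying the two positions that carry one and the same crossing label. In $\mathfrak{M}_w$ the combinatorial primitives I need are all first-order over $<$: ``the positions $p,q$ are the two occurrences of a label'', and the interlacement predicate $\mathrm{Int}(x,y)$ saying that exactly one of the two occurrences of $y$ lies strictly between the two occurrences of $x$. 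For multi-words one simply lays out the several component words inside $\mathfrak{M}_w$ and uses the standard generalisation of interlacement; I would treat this bookkeeping as routine.

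Next I would bring in the algebraic description of the minimal genus. To $w$ associate its interlacement matrix $A_w$ over $\Zed_2$: symmetric with zero diagonal, indexed by the crossing labels, $A_w[i,j]=1$ iff $\mathrm{Int}(i,j)$. By the classical Cohn--Lempel-type computation of the number of boundary circles of the ribbon surface built from a chord diagram (see also Lov\'asz and the literature on the genus range of Gauss codes), the minimal genus of an orientable surface into which the curve coded by $w$ embeds cellularly equals $\frac12\,\mathrm{rank}_{\Zed_2}(A_w)$ --- the rank being automatically even, since $A_w$ is alternating in characteristic $2$. Because adjoining handles never destroys an embedding, $w$ is embeddable in the genus-$g$ surface if and only if $\mathrm{rank}_{\Zed_2}(A_w)\le 2g$. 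This reduces the theorem to expressing a bound on a $\Zed_2$-rank.

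The key observation is that for a \emph{fixed} constant $r=2g$ the property $\mathrm{rank}_{\Zed_2}(A_w)\le r$ is bounded, hence first-order: it says that every $r+1$ rows of $A_w$ are linearly dependent over $\Zed_2$, namely
\[ \forall x_1\cdots x_{r+1}\ \Bigl[\ \bigl({\textstyle\bigwedge_{a<b}}\,x_a\neq x_b\bigr)\ \longrightarrow\ \bigvee_{\emptyset\neq T\subseteq\{1,\dots,r+1\}}\ \forall y\ \bigl({\textstyle\sum_{i\in T}}A_w[x_i,y]=0\ \mbox{ in }\ \Zed_2\bigr)\ \Bigr], \]
where $x_1,\dots,x_{r+1},y$ range over crossing labels. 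Since $r+1$ is a constant, $\forall x_1\cdots x_{r+1}$ is an ordinary first-order quantifier block, the disjunction over nonempty $T\subseteq\{1,\dots,r+1\}$ is finite, and the clause ``$\sum_{i\in T}A_w[x_i,y]=0$'', i.e.\ ``an even number of the $x_i$ with $i\in T$ are interlaced with $y$'', is --- $T$ being of constant size --- a finite Boolean combination of the atoms $\mathrm{Int}(x_i,y)$. Substituting the first-order definitions of $\mathrm{Int}$ and of ``distinct labels'', and conjoining the first-order check that $w$ is a well-formed Gauss (multi\nobreakdash-)word, yields $\varphi_g$; the sentence uses no set quantifiers at all, so the property is even first-order, which refines the MSO statement claimed.

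The genuine obstacle, I expect, lies not in the logic but in nailing down the genus--rank dictionary in the precise generality needed: that the surface in question is the minimal \emph{orientable} one, that ``embeddable in genus $g$'' coincides with ``minimal genus $\le g$'', and that the interlacement matrix and the formula $g=\frac12\,\mathrm{rank}_{\Zed_2}$ survive the passage from single words to multi-words, where a crossing may join two components. Once that dictionary is in place, the logical content is just the slogan that a constant bound on a $\Zed_2$-rank is expressible without any counting. By contrast, if one tried to argue directly from Euler's formula on the combinatorial Carter surface of Lemma~\ref{T_Faces}, one would have to express ``$F-V = 2-2g$'' for the MSO-definable face set and vertex set --- a difference of two unbounded quantities, which is not evidently MSO-definable --- and that is exactly why I would route the argument through the rank instead.
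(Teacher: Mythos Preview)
The paper does not prove this theorem at all: it is quoted from \cite{courcelle-diagonal} and stated without argument, so there is no in-paper proof to compare against. Your attempt must therefore stand on its own.

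The logical skeleton is sound --- a fixed bound on a $\Zed_2$-rank is first-order --- but the reduction it rests on is false. The identity
\[
\text{(minimal genus of a realising surface)} \;=\; \tfrac12\,\mathrm{rank}_{\Zed_2}(A_w),
\]
with $A_w$ the plain interlacement matrix, fails already for $w=123123$, the Gauss code of the trefoil shadow. All three pairs of labels interlace, so $A_w$ is the $3\times 3$ matrix with zeros on the diagonal and ones elsewhere; over $\Zed_2$ its rows sum to zero and the rank is $2$, so your formula predicts genus $1$. But the trefoil projection is a generic closed curve in the plane, hence the minimal genus is $0$. More generally, every planar Gauss word with at least one interlaced pair of labels is a counterexample.

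What the Cohn--Lempel identity actually computes is the genus of the \emph{chord-diagram ribbon surface}: thicken the parametrising circle to an annulus and glue an untwisted band along each chord. That surface is not a regular neighbourhood of any curve realising $w$; in the curve picture the double points are $4$-valent vertices whose rotation system is additional data not carried by the unsigned word. For a \emph{signed} word that rotation data is fixed, and the Carter-surface genus does admit a rank description --- but via a sign-dependent modification of $A_w$, not $A_w$ itself. Your appeal to condition~3 of Cairns--Elton to discard the signs is misplaced: that lemma concerns only the specific cochain $B(w)$, not the genus. A corrected argument along your lines would have to existentially quantify over signings, which is a monadic set quantifier over the crossing labels; so even if repaired it would yield an MSO sentence, not the FO one you announce.
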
 

\begin{corollary}
The planarity of unsigned Gauss words is definable in MSO
\end{corollary}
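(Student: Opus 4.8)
The plan is to derive the statement as the $g=0$ instance of the theorem of Courcelle stated above \cite{courcelle-diagonal} (we only need its single-word, as opposed to multi-word, case). A closed orientable surface of genus $0$ is the $2$-sphere $S^2$, so that theorem at $g=0$ furnishes an MSO formula $\varphi_0$ expressing that a Gauss word codes a generic self-intersecting closed curve embeddable in $S^2$. The first step is to observe that, for a generic immersed closed curve, embeddability in $S^2$ coincides with embeddability in the plane: in one direction $\Real^2$ embeds in $S^2$; in the other, the image of such a curve is a finite $1$-complex and hence does not fill $S^2$, so stereographic projection from any point of its complement carries the curve into $\Real^2$. Thus ``$\varphi_0$ holds of $w$'' says exactly that $w$ codes a curve drawable in the plane, which is the notion of planarity used in this paper.

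The second step is to reconcile ``code of a self-intersecting closed curve'' with ``unsigned Gauss word'' in the sense of the definitions above. An unsigned Gauss word $w$ over $\Sigma\times\Nat$, $\Sigma=\{U,O\}$, yields, by deleting its $\{U,O\}$-component, a double-occurrence word $\widehat w$ over $\Nat$; the relational structure encoding $\widehat w$ (positions, order, the matching relation pairing the two occurrences of each index) is a reduct of that encoding $w$ -- one simply forgets the $U$/$O$ unary predicates -- so every MSO sentence about $\widehat w$ is, verbatim, an MSO sentence about $w$. Planarity of $w$ -- being read off a planar diagram of a classical knot in $\Real^3$ -- is equivalent to realisability of $\widehat w$ as the self-crossing sequence of a generic closed curve in the plane, because the over/under decoration is immaterial: any planar $4$-valent graph underlying such a curve becomes a classical knot diagram under an arbitrary choice of over/under data at its vertices, and conversely the shadow of a planar diagram realises $\widehat w$ in the plane. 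Hence the planarity predicate on $w$ is expressed by $\varphi_0$, read as an MSO formula about $w$.

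The third step is a triviality: the class of words that are unsigned Gauss words at all -- for every $n$, the symbols $(U,n)$ and $(O,n)$ occur with equal multiplicity, that multiplicity being $0$ or $1$ -- is cut out by a first-order, hence MSO, counting condition, and MSO-definable classes are closed under conjunction. So planarity of unsigned Gauss words is defined by the conjunction of ``being an unsigned Gauss word'' with the formula $\varphi_0$ of the previous paragraph, which proves the corollary.

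The only substantive ingredient is the theorem of Courcelle, which I would invoke as a black box; all remaining steps are bookkeeping. The one point requiring care is to match the signature and model of Gauss (multi-)words used in \cite{courcelle-diagonal} with the definitions adopted here, so that $\varphi_0$ transfers, and to confirm that ``surface of genus $g$'' there denotes the closed orientable surface, so that $g=0$ is genuinely $S^2$ rather than, say, a family including non-orientable surfaces; under the natural reading of \cite{courcelle-diagonal} no obstacle remains.
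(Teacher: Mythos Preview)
Your proposal is correct and matches the paper's approach: the paper states the corollary without proof, treating it as the immediate $g=0$ instance of Courcelle's theorem, and the subsequent Note in the paper handles exactly your caveat about matching the relational encoding and logic of \cite{courcelle-diagonal} with those used here. Your additional care about $S^2$ versus $\Real^2$ and the irrelevance of the $U/O$ decoration is sound bookkeeping that the paper leaves implicit.
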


\begin{note}
The encoding of Gauss (multi)words by relational structures and logic MSO used in \cite{courcelle-diagonal} are  different from the encoding of data words and logic $MSO^{\ast}$ 
from \cite{neven2004fsm}, but insignificantly. It is straightforward to show that over Gauss (multi)words the notion of definability is the same for both cases.  
\end{note}

It follows that our result on co-NRA recognisability of the planar unsigned Gauss words 
 is incomparable with the above MSO-definability result restricted to the Gauss words (as opposed to multiwords). 
On the other hand  in \cite{courcelle-diagonal} MSO-definability is shown for 
 the general case of multiwords (i.e. the case of link diagrams is covered too) and 
 for embeddability of curves in the surfaces of arbitrary genus. 
For simplicity we discussed the case of a knot, but all results clearly extend 
 to link diagrams in surfaces of any genus, because our approach is based on 
 computing the Euler characteristic of the combinatorial Carter surface that has 
 the least genus among all compact oriented surfaces containing a link diagram 
 encoded by a given collection of Gauss words \cite[Lemma~3.2]{kurlin2006gpr}.
Also of interest here would be to replace co-NRA automata with some variants of
 pebble automata and produce better upper bounds than MSO-definability. 
    
\section{Conclusion}

We have applied automata over infinite alphabets for studying 
complexity of problems related to knots. We have shown that 
the language of the signed Gauss words can be
recognised by deterministic two-way register automata, while 
for the same problem for unsigned words we demonstrated an upper 
bound in terms of  co-NRA. Our results on recognisability of planarity of Gauss words are summarised in the Table 1.
The obvious next step is to try to establish a lower and better upper bound for 
the latter problem. More generally, the automata based approach  opens 
perspectives for studying more complex knot problems, like 
unknotedness or  equivalence. Non-trivial lower bounds for such 
problems are unknown and weak automata models are plausible 
candidates here to try.   In opposite direction, knot theory 
provides a reach supply of natural problems formulated in 
terms of languages over infinite alphabets, and that, one 
may expect, will influence the development of the theory of 
such languages and related computational models.

\begin{table}[h]
\caption{Recognising planar and non-planar Gauss words by different automata models}
\centering
\begin{tabular}{|c|rr|rr|}
\hline
  & Planar &  & Non-planar&   \\
\hline

 & Signed & Unsigned & Signed & Unsigned \\
DRA& $\surd$ & ? & $\surd$ & ? \\
NRA& $\surd$& ? & $\surd$ & $\surd$ \\
Co-NRA& $\surd$&$\surd$& $\surd$&$\surd$ \\
\hline
\end{tabular}
\end{table}

In Table 1, the tick ($\surd$) indicates that the language is recognisable by 
 the corresponding automata, while the question mark (?) indicates that it is not known yet 
 whether the automaton in question can recognise the specified language.

\bibliography{knots-25}{}
\bibliographystyle{splncs}
\end{document}